\newtheorem{theorem}{Theorem}
\newtheorem{remark}{Remark}
\DeclareMathAlphabet{\pazocal}{OMS}{zplm}{m}{n}
\begin{document}
\begin{frontmatter}

\title{Online Bayesian Learning of Agent Behavior in Differential Games}

\author[]{Francesco Bianchin, Robert Lefringhausen, Sandra Hirche} 

\address{School of Computation, Information and Technology \\ (e-mail:
{\tt\small\{francesco.bianchin, robert.lefringhausen, sandra.hirche\}@tum.de}).}

\begin{abstract}                
This work introduces an online Bayesian game-theoretic method for behavior identification in multi-agent dynamical systems. By casting Hamilton–Jacobi–Bellman optimality conditions as linear-in-parameter residuals, the method enables fast sequential Bayesian updates, uncertainty-aware inference, and robust prediction from limited, noisy data—without history stacks. The approach accommodates nonlinear dynamics and nonquadratic value functions through basis expansions, providing flexible models. Experiments, including linear–quadratic and nonlinear shared-control scenarios, demonstrate accurate prediction with quantified uncertainty, highlighting the method’s relevance for adaptive interaction and real-time decision making.
\end{abstract}

\begin{keyword}
Differential or dynamic games, Probabilistic and Bayesian methods for system identification, Learning methods for control, Multi-agent systems, Human–robot interaction
\end{keyword}

\end{frontmatter}

\section{Introduction}
In multi-agent settings, effective interaction increasingly depends on the ability to infer, predict, and adapt to the objectives of other agents. Collaborative, competitive, and mixed-autonomy scenarios—ranging from shared-control settings to multi-agent coordination and broader human–robot interaction—can all be modeled as differential games, in which each agent optimizes its own objective while interacting within a shared dynamical environment. Game-theoretic models provide a principled framework for describing such interactions \citep{isaacs1965differential,basar_book}, and have proven effective for generating adaptive and anticipatory strategies across a variety of domains \citep{MUSIC202010216,Li_HRI,JAS-2019-0028}.

A central challenge is that agent objectives are often unknown or only partially specified. Controllers that assume perfect knowledge of these objectives can perform poorly when agents deviate from the expected behavior, leading to degraded performance or unsafe outcomes. As a result, many practical approaches avoid constructing explicit models of human intent, opting instead for adaptive or heuristic strategies that respond to observed behavior \citep{Li_HRI,MUSIC202010216}. While successful in specific settings, these methods provide limited interpretability and lack robustness under irregular or hard-to-model behaviors.

Inverse Optimal Control \citep{IRL_observer} and Inverse Reinforcement Learning \citep{IRL_review} aim to recover underlying objective functions from observations, offering structured models of decision making. Extensions to multi-agent settings via inverse differential games enable identification of individual objectives in coupled systems \citep{ILQG_SharedControl,molloy2022inverse}. However, key challenges remain. Updates must be fast enough for online use \citep{IRL_observer, ILQG_SharedControl}; moreover, real-world applications frequently involve nonlinear dynamics and nonquadratic value functions, extending beyond classical linear–quadratic (LQ) formulations \citep{engwerda, FB_inverse_LQDG, bianchin2025settheoreticrobustcontrolapproach}. The general setting has been handled in different works by expressing the Hamilton–Jacobi–Bellman (HJB) optimality conditions underlying demonstrations as linear-in-parameter residuals, but existing methods either require extensive information on the agents' objectives \citep{VAMVOUDAKIS20111556}—incompatible with settings involving unknown agents—or focus solely on single-agent problems \citep{Self2018OnlineIR}. Most importantly, inference must be robust to disturbances and behavioral variability. For example, human behavior routinely departs from nominal patterns, yet deterministic predictors cannot accommodate such deviations. Ensuring safety and reliability therefore requires uncertainty-aware inference that explicitly represents variability rather than assuming a single predicted trajectory. In addition, predictions should remain informative with limited initial data.

The main contribution of this paper is a Bayesian framework for inferring agent objectives in interactive settings from observed behavior. The method models approximate satisfaction of HJB optimality conditions and enables fast sequential updates through conjugate Gaussian inference, producing uncertainty-aware estimates of agent objectives at control frequencies. This formulation provides principled regularization through the formulation of priors, avoids the history-stack requirements common in prior approaches \citep{IRL_observer}, and yields informative predictions from limited data. The quantified uncertainty directly supports conservative or exploratory policies, enabling confidence-aware decision making in multi-agent interaction. The framework also extends naturally beyond LQ assumptions: nonlinear dynamics and general value functions are accommodated via basis expansions, offering flexible yet interpretable models capable of capturing parametric and behavioral variation over time. We validate the approach in both LQ and nonlinear shared-control scenarios, demonstrating its practical relevance and applicability.

The remainder of the paper is structured as follows. Section~II formulates the differential game model and the associated inverse problem. Section~III introduces the approximate HJB regression framework. Section~IV presents the online Bayesian estimator and its role in probabilistic prediction. Section~V illustrates the approach in both linear–quadratic and nonlinear shared-control scenarios. Section~VI concludes the paper.

\section{Problem Formulation}

We consider a two-player, infinite-horizon, nonzero-sum differential game with dynamics
\begin{equation}
    \dot{\bm{x}}
    = \bm{f}(\bm{x})
    + \bm{g}_1(\bm{x})\,\bm{u}_1
    + \bm{g}_2(\bm{x})\,\bm{u}_2 ,
    \label{eq:dynamics}
\end{equation}
where $\bm{x}\in\mathbb{R}^{n_x}$ and $\bm{u}_i\in\mathbb{R}^{n_{u_i}}$ denote the control inputs of players $i\in\{1,2\}$. 
Each player minimizes the infinite-horizon cost
\begin{equation}
    J_i(\bm{x}_0,\bm{u}_1,\bm{u}_2)
    = \int_{0}^{\infty}
        \Big(
            Q_i(\bm{x}(t))
            + \bm{u}_i(t)^\top \bm{R}_i \bm{u}_i(t)
        \Big) dt ,
    \label{eq:cost}
\end{equation}
where the state cost $Q_i:\mathbb{R}^{n_x}\!\to\!\mathbb{R}_{\ge 0}$ and the positive-definite matrix $\bm{R}_i$ are unknown to the learner (which may be an external estimator or one of the players).

\begin{remark}
The two-player setup is used for notational clarity; all definitions extend directly to $N$-player differential games with dynamics $\dot{\bm{x}}=\bm{f}(\bm{x})+\sum_{i=1}^N \bm{g}_i(\bm{x})\bm{u}_i$ and analogous cost functionals.
\end{remark}
We assume that the agents interact through stabilizing stationary feedback laws $\bm{u}_i=\bm{\mu}_i(\bm{x})$, yielding value functions
\begin{equation}
    V_i(\bm{x}(t))
    = \int_{t}^{\infty}
        \Big(
            Q_i(\bm{x}(\tau))
            +
            \bm{\mu}_i(\bm{x}(\tau))^\top
            \bm{R}_i\,
            \bm{\mu}_i(\bm{x}(\tau))
        \Big) d \tau .
    \label{eq:value}
\end{equation}

We assume that the agents operate at a feedback Nash equilibrium \citep{basar_book}, namely a condition in which no component of the strategy pair $(\boldsymbol{\mu}_1^*, \boldsymbol{\mu}_2^*)$ can be changed unilaterally without increasing the corresponding player’s cost.
At any such equilibrium, $(V_i,Q_i,\bm{R}_i)$ satisfy the coupled Hamilton--Jacobi--Bellman equations
\begin{align}
0 &=
Q_i(\bm{x})
+ \bm{u}_i^\top \bm{R}_i \bm{u}_i
+ \nabla V_i(\bm{x})^\top
  \Big(
      \bm{f}(\bm{x})
      + \bm{g}_1(\bm{x})\bm{u}_1
  \nonumber\\[-2mm]
&\qquad\qquad\qquad\qquad
      +\,\bm{g}_2(\bm{x})\bm{u}_2
  \Big)
=: \pazocal{H}_i(\bm{x},\bm{u}_1,\bm{u}_2),
\label{eq:hjb_general}
\end{align}
and the stationarity of $\pazocal{H}_i$ with respect to $\bm{u}_i$ yields the equilibrium feedback law
\begin{align}
    \bm{\mu}_i^*(\bm{x})
    = -\tfrac{1}{2}\,\bm{R}_i^{-1}\bm{g}_i(\bm{x})^\top \nabla V_i(\bm{x}),
    \label{eq:feedback}
\end{align}
which, by $\bm{R}_i \succ 0$, uniquely characterizes the optimal action for player~$i$.

In the inverse setting considered here, only discrete-time samples of the 
agents’ closed-loop trajectories 
$(\bm{x}(t_k), \dot{\bm{x}}(t_k), \allowbreak \bm{u}_1(t_k), \bm{u}_2(t_k))$ are available, 
while the objective components $(Q_i, V_i, \bm{R}_i)$ are unknown.  
We assume measurements are collected at times $t_k = k\Delta t$, and 
$\dot{\bm{x}}(t_k)$ is obtained either from the known 
dynamics~\eqref{eq:dynamics} or from filtered finite differences.  
The observed behavior is assumed to approximately satisfy the HJB 
relations~\eqref{eq:hjb_general} and the feedback structure~\eqref{eq:feedback}.

\textbf{Problem statement.}
Given dynamics~\eqref{eq:dynamics} and observed trajectories, infer probabilistic models of the unknown objective components $(Q_i, V_i, \bm{R}_i)$ whose posterior realizations approximately satisfy the Hamilton--Jacobi--Bellman relations~\eqref{eq:hjb_general} together with the corresponding feedback conditions~\eqref{eq:feedback}. The aim is to obtain predictive distributions over players' admissible equilibrium behavior across interactive settings.

To obtain a tractable inference procedure, we next introduce a feature-based parametrization of the objectives that makes the HJB relations linear in the decision parameters.

\section{Feature-Based Inverse Differential Games}
We now introduce the feature-based parametrization used to represent the 
unknown objectives $(V_i, Q_i, \bm{R}_i)$.  
Let $\Omega \subset \mathbb{R}^{n_x}$ be a compact subset of the state 
space containing all observed trajectories. On~$\Omega$, stabilizing 
feedback laws yield value functions $V_i$ that are locally Lipschitz and 
differentiable almost everywhere. Under standard regularity conditions, 
$V_i$ belongs to a suitable Sobolev space on~$\Omega$, and classical 
approximation results 
(e.g., Weierstrass and universal approximation theorems 
\citep{ABUKHALAF2005779, adams2003sobolev}) ensure that both $V_i$ and 
its gradient can be approximated arbitrarily well on~$\Omega$.  
Accurate approximation of $\nabla V_i$ is particularly important, as it 
enters the HJB relations directly and hence determines the Hamiltonian 
residuals used for inverse inference.

To obtain a tractable, linear-in-parameters representation of the unknown 
objectives, we approximate the value and state cost functions using 
differentiable feature maps
\[
    \phi_{V_i}:\Omega\to\mathbb{R}^{p_i}, \qquad
    \phi_{Q_i}:\Omega\to\mathbb{R}^{s_i},
\]
with typical choices including polynomial features, radial basis functions, 
or other standard universal approximators.  
The feature sets may incorporate task- or domain-specific structure (e.g., known 
symmetries or dominant state interactions) and are selected to provide smooth 
approximations compatible with the HJB relations.
The control cost does not require approximation: since 
$\bm{u}_i^\top \bm{R}_i \bm{u}_i$ is quadratic in $\bm{u}_i$, it admits an 
exact linear parameterization through
\[
    \phi_{R_i}(\bm{u}_i)
    :=
    \operatorname{vec}\!\left(\bm{u}_i \bm{u}_i^\top\right)
    \in\mathbb{R}^{z_i}.
\]

Under these representations, the objectives take the form
\begin{subequations}\label{feat_approx}
\begin{align}
    V_i(\bm{x})
        &= \bm{W}_{V_i}^\top \phi_{V_i}(\bm{x})
           + \varepsilon_{V_i}(\bm{x}), \\[1mm]
    Q_i(\bm{x})
        &= \bm{W}_{Q_i}^\top \phi_{Q_i}(\bm{x})
           + \varepsilon_{Q_i}(\bm{x}), \\[1mm]
    \bm{u}_i^\top \bm{R}_i \bm{u}_i
        &= \bm{W}_{R_i}^\top \phi_{R_i}(\bm{u}_i),
\end{align}
\end{subequations}
where terms $\varepsilon_{V_i}$ and $\varepsilon_{Q_i}$ capture the 
approximation error, which decreases for standard universal or 
increasing-order feature families as $(p_i,s_i)$ grow.  
In contrast, the control cost is represented exactly, as 
$\bm{u}_i^\top \bm{R}_i \bm{u}_i$ is linear in the entries of $\bm{R}_i$.
The parameters 
$\bm{W}_{V_i}\!\in\!\mathbb{R}^{p_i}$, 
$\bm{W}_{Q_i}\!\in\!\mathbb{R}^{s_i}$, and 
$\bm{W}_{R_i}\!\in\!\mathbb{R}^{z_i}$ 
encode the unknown objectives; recovering them is therefore equivalent to 
solving the inverse differential game.  
Observed trajectory samples yield data 
through the HJB and feedback relations, and the aim is to infer posterior 
distributions over the weights such that
\[
    \pazocal{H}_i(\bm{x},\bm{u}_1,\bm{u}_2;
        \bm{W}_{V_i},\bm{W}_{Q_i},\bm{W}_{R_i})
    \approx 0
\]
along the observed trajectories.

\subsection{HJB conditions in feature space.}
Substituting~\eqref{feat_approx} into the stationary HJB 
equation~\eqref{eq:hjb_general} and using the dynamics~\eqref{eq:dynamics} 
yields the linear-in-parameters condition
\begin{align}
    0 &=
    \bm{W}_{Q_i}^\top \phi_{Q_i}(\bm{x})
    + \bm{W}_{R_i}^\top \phi_{R_i}(\bm{u}_i)
    + \bm{W}_{V_i}^\top
        \nabla\phi_{V_i}(\bm{x})\,\dot{\bm{x}}
    + \eta_{i}^{\mathrm{HJB}},
    \label{eq:hjb_linear_param}
\end{align}
where $\eta_{i}^{\mathrm{HJB}}$ collects the approximation errors 
$\varepsilon_{V_i}$, $\varepsilon_{Q_i}$ and their gradients.  
These terms are treated as residual noise in the regression model below.

Similarly, substituting into the feedback law~\eqref{eq:feedback} gives
\begin{align}
    \bm{u}_i
    = -\tfrac{1}{2}\bm{R}_i^{-1}
      \bm{g}_i(\bm{x})^\top
      \nabla\phi_{V_i}(\bm{x})^\top \bm{W}_{V_i}
      + \eta_i^{\mathrm{fb}},
    \label{eq:feedback_linear_param}
\end{align}
where $\eta_i^{\mathrm{fb}}$ captures the corresponding approximation error.

Collecting all unknown coefficients in
\[
    \bm{W}_i =
    \big[
        \bm{W}_{V_i}^\top \;\;
        \bm{W}_{Q_i}^\top \;\;
        \bm{W}_{R_i}^\top
    \big]^\top,
\]
the relations~\eqref{eq:hjb_linear_param}–\eqref{eq:feedback_linear_param} 
yield linear constraints on $\bm{W}_i$ up to residual noise terms.  
Since inverse differential-game objectives are identifiable only up to a
positive scale factor~\citep{IRL_observer}, we fix $R_{i,[11]}$ and remove this
entry, yielding the reduced vector 
\[
    \bm{W}_i^-=
    \big[
        \bm{W}_{V_i}^\top\;\;
        \bm{W}_{Q_i}^\top\;\;
        (\bm{W}_{R_i}^-)^\top
    \big]^\top.
\]
Constant offsets in $V_i$ do not affect the feedback law and are omitted, so
$\bm{W}_i^-$ represents one element of the equivalence class consistent with
the observed behavior.

This leads to the regression model  
\begin{align}
    \bm{y}_i = \bm{\Phi}_i \bm{W}_i^- + \bm{\eta}_i ,
    \label{eq:regression_model}
\end{align}
where $\bm{\eta}_i$ collects residual terms arising from finite-dimensional 
approximation~\eqref{feat_approx}, numerical differentiation, and deviations from exact HJB 
optimality along the observed trajectories.  
The regressor and target follow directly from the HJB stationarity and feedback optimality conditions, with the regressor matrix given by
\begin{align}
\bm{\Phi}_i =
\begin{bmatrix}
    (\nabla\phi_{V_i}\dot{\bm{x}})^\top
        &
        \phi_{Q_i}^\top
        &
        \phi_{R_i}^{-,\top}
      \\[4pt]
    \bm{g}_i^\top \nabla\phi_{V_i}^\top
        &
        \mathbf{0}_{m_i\times s_i}
        &
        \begin{bmatrix}
            \mathbf{0}_{1\times(z_i-1)} \\
            2\,\mathrm{diag}(u_{i,2:n_{u_i}})
        \end{bmatrix}
\end{bmatrix},
\end{align}
where the bottom-right block reflects the gradient of 
$\bm{u}_i^\top \bm{R}_i \bm{u}_i$ with respect to the remaining entries 
of $\bm{R}_i$ after removing the fixed element $R_{i,[11]}$.  
The regression target becomes
\begin{align}
    \bm{y}_i =
    \begin{bmatrix}
        - R_{i,[11]} u_{i,1}^2 \\
        - 2 R_{i,[11]} u_{i,1} \\
        \mathbf{0}_{(n_{u_i}-1) \times 1}
    \end{bmatrix},
\end{align}
recalling that $R_{i,[11]}$ is fixed to resolve the scale ambiguity in the inverse HJB formulation.

The first row enforces the HJB stationarity condition~\eqref{eq:hjb_linear_param},  
while the remaining rows encode the optimality conditions~\eqref{eq:feedback_linear_param}.  
The linear form~\eqref{eq:regression_model} enables conjugate Bayesian 
updates for real-time, uncertainty-aware inference of the players' objectives.

\section{Bayesian Online Estimation of Player Objectives}
The linear relation~\eqref{eq:regression_model} enables probabilistic 
inference over the reduced parameter vector $\bm{W}_i^-$.  
A Bayesian treatment is natural here: it allows prior structural information 
about the objectives to be incorporated and maintains an explicit 
representation of uncertainty, which propagates through the equilibrium 
feedback laws and induces predictive distributions over future inputs and 
state evolution.

At each sampling instant $t_k$, the available discrete-time measurements 
$(\bm{x}(t_k), \dot{\bm{x}}(t_k), \bm{u}_1(t_k), \bm{u}_2(t_k))$ are used to 
construct a data pair $(\bm{\Phi}_i^{(k)}, \bm{y}_i^{(k)})$ via the feature-based 
representation of the HJB and feedback relations.  
Accounting for approximation error and deviations from exact optimality, each 
observation is modeled as
\begin{align}
    \bm{y}_i^{(k)} &= \bm{\Phi}_i^{(k)} \bm{W}_i^- + \bm{\eta}_i^{(k)} .
\end{align}

We adopt a Gaussian prior for $\bm{W}_i^-$,
\begin{align}
    \bm{W}_i^- \sim \pazocal{N}(\bm{m}_{0,i}, \bm{S}_{0,i}),
\end{align}
which compactly encodes prior structure (e.g., expected scale or sparsity) and 
quantifies initial uncertainty.  
The disturbance term $\bm{\eta}_i^{(k)}$ is modeled as
\begin{align}
    \bm{\eta}_i^{(k)} \sim \pazocal{N}(\mathbf{0}, \bm{\Sigma}_i),
\end{align}
reflecting the aggregation of several independent error sources in the HJB 
residual—feature-approximation error, numerical differentiation of 
$\dot{\bm{x}}(t_k)$, and modeling mismatch—whose combined effect is 
well-approximated as Gaussian by standard central-limit arguments.

Under these linear–Gaussian assumptions, the posterior distribution of 
$\bm{W}_i^-$ remains Gaussian and admits closed-form recursive updates, 
providing an efficient inference scheme suitable for online operation.  
The formulation here focuses on static objective parameters over a single 
interaction episode; extensions to time-varying objectives can be developed 
by placing a stochastic evolution model on $\bm{W}_i^-$.

\subsection{Conjugate Bayesian Update}

Let $\pazocal{D}_k$ denote all data up to time $t_k$. Then
\begin{align}
    \bm{W}_i^- \mid \pazocal{D}_k
        &\sim \pazocal{N}\!\left(\bm{m}_i^{(k)},\bm{S}_i^{(k)}\right),
\end{align}
where $\bm{m}_i^{(k)}$ and $\bm{S}_i^{(k)}$ are the posterior mean and covariance.

Given a new sample $(\bm{\Phi}_i^{(k+1)},\bm{y}_i^{(k+1)})$, the updates are:

\smallskip
\noindent\textit{1) Predicted output:}
\begin{align}
    \hat{\bm{y}}_i^{(k+1)}
        &= \bm{\Phi}_i^{(k+1)} \bm{m}_i^{(k)}.
\end{align}

\noindent\textit{2) Predictive covariance:}
\begin{align}
    \bm{S}_{y,i}
        &= \bm{\Phi}_i^{(k+1)} \bm{S}_i^{(k)}
           \bm{\Phi}_i^{(k+1)\!\top}
           + \bm{\Sigma}_i.
\end{align}

\noindent\textit{3) Posterior gain:}
\begin{align}
    \bm{K}_i
        &= \bm{S}_i^{(k)} \bm{\Phi}_i^{(k+1)\!\top} \bm{S}_{y,i}^{-1}.
\end{align}

\noindent\textit{4) Posterior mean:}
\begin{align}
    \bm{m}_i^{(k+1)}
        &= \bm{m}_i^{(k)}
           + \bm{K}_i\!\left(
               \bm{y}_i^{(k+1)}
               - \hat{\bm{y}}_i^{(k+1)}
           \right).
\end{align}

\noindent\textit{5) Posterior covariance:}
\begin{align}
    \bm{S}_i^{(k+1)}
        = (\bm{I} &- \bm{K}_i \bm{\Phi}_i^{(k+1)})\, \bm{S}_i^{(k)}
           (\bm{I} - \bm{K}_i \bm{\Phi}_i^{(k+1)})^\top \nonumber \\
           &  + \bm{K}_i \bm{\Sigma}_i \bm{K}_i^\top.
\end{align}

The update requires only the latest sample and the previous posterior; no
history stack is maintained.  The covariance $\bm{S}_i^{(k)}$ evolves in real
time, reflecting how strongly the data support the inferred objective and
highlighting inconsistencies with newly observed behavior through posterior
shifts or increases in uncertainty.

\subsection{Predictive Modeling and Monte Carlo Forecasting}
\label{sec:forecast_envelope}
Given the posterior $\pazocal{N}(\bm{m}_i^{(k)},\bm{S}_i^{(k)})$ over the
reduced parameter vector $\bm{W}_i^-$, the feedback law
$\bm{\mu}_i^\star(\bm{x};\bm{W}_i^-)$ in~\eqref{eq:feedback_linear_param}
induces a \emph{stochastic policy model} for player~$i$ over a prediction
rollout:
\[
    \bm{u}_i(t)
    = \bm{\mu}_i^\star\!\big(\bm{x}(t);\bm{W}_i^-\big),
    \qquad
    \bm{W}_i^- \sim \pazocal{N}(\bm{m}_i^{(k)},\bm{S}_i^{(k)}).
\]
This stochastic policy is not executed by the agent; it reflects our belief,
derived from data, about how the agent is likely to act under parameter
uncertainty.  
Consequently, the induced closed-loop evolution is stochastic only through
uncertainty in~$\bm{W}_i^-$.  
For any horizon $T>0$, the resulting distribution over trajectories is given
by the pushforward measure
\[
    \pazocal{P}_{\bm{x}(\cdot)}
        = \Gamma_\#\!
          \left( 
              \pazocal{N}(\bm{m}_i^{(k)},\bm{S}_i^{(k)})
          \right),
\]
where $\Gamma$ maps a parameter realization to its closed-loop trajectory on
$[0,T]$.

Since this distribution is analytically intractable, we approximate it via
Monte Carlo simulation.  
For each rollout $s=1,\dots,N_{\mathrm{mc}}$, we sample
$\bm{W}_i^{-(s)}$ from the posterior, generate the corresponding stochastic
policy model, and simulate the closed-loop dynamics to obtain trajectories
$\bm{x}^{(s)}(t)$ and $\bm{u}_i^{(s)}(t)$.
We use the Monte Carlo rollouts to construct a \emph{forecast envelope} for
the predicted closed-loop evolution.  
For $t=0:T$ and state dimension $n_x$, we introduce componentwise bounds
\begin{align}
\label{envelope}
    \pazocal{X}_t(\theta)
    = \{\bm{x}\in\mathbb{R}^{n_x} : \underline{x}_{t,j} \;\le\; x_j \;\le\; \overline{x}_{t,j},\;
       j=1{:}n_x\},
\end{align}
parameterized by
$\theta = \{\underline{x}_{t,j},\, \overline{x}_{t,j}\}_{t=0{:}T,\, j=1{:}n_x}
\in\mathbb{R}^d$, where $d=2(T+1)n_x$.
We focus on state envelopes for clarity, although the same construction
applies to the predicted input sequence $\bm{u}_i(t)$ obtained
from the Monte Carlo rollouts.

\begin{theorem}[Reliability of the Forecast Envelope]
\label{thm:forecast_tube}
Let $\theta^\star$ denote an envelope parameter obtained from the Monte
Carlo trajectories $\bm{x}^{(s)}(t)=\Gamma(\bm{W}_i^{-(s)})(t)$,
$s=1:N_{\mathrm{mc}}$, $t=0:T$.  
For any $\varepsilon,\beta\in(0,1)$, if the number of samples
$N_{\mathrm{mc}}$ satisfies
\begin{equation}
\label{eq:scenario_bound}
\sum_{i=0}^{d-1}
    \binom{N_{\mathrm{mc}}}{i}
    \varepsilon^{i}(1-\varepsilon)^{N_{\mathrm{mc}}-i}
\le \beta,
\end{equation}
then
\[
\mathbb{P}\!\left(V(\theta^\star)\le\varepsilon\right)\ge 1-\beta,
\]
where 
$V(\theta)
=\mathbb{P}_{\bm{W}_i^-}\{\exists\,t:\Gamma(\bm{W}_i^-)(t)
  \notin\pazocal{X}_t(\theta)\}$,
and the probability is with respect to the draw of the
$N_{\mathrm{mc}}$ posterior samples.
\end{theorem}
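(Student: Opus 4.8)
The plan is to recognize this theorem as an instance of the classical scenario optimization reliability bound. The key observation is that constructing the forecast envelope is equivalent to solving a scenario optimization program: given the Monte Carlo samples $\bm{x}^{(s)}(t)$, the envelope parameter $\theta^\star$ is obtained by tightening the componentwise bounds $\underline{x}_{t,j}, \overline{x}_{t,j}$ so that all $N_{\mathrm{mc}}$ sampled trajectories lie inside the resulting box $\{\pazocal{X}_t(\theta^\star)\}_{t=0:T}$. Since each sampled trajectory $\Gamma(\bm{W}_i^{-(s)})$ is an independent draw induced by an i.i.d.\ posterior sample $\bm{W}_i^{-(s)}$, the envelope fitting is a convex (indeed, separable and linear) scenario program in the $d = 2(T+1)n_x$ decision variables $\theta$, with one constraint per sampled trajectory.

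First I would cast the envelope construction formally as the optimization problem that minimizes envelope volume (or any tightening objective) subject to the constraint that each scenario $\bm{x}^{(s)}(\cdot)$ is contained, i.e.\ $\underline{x}_{t,j}\le x^{(s)}_j(t)\le \overline{x}_{t,j}$ for all $t,j,s$. Each such constraint set is convex in $\theta$, and the constraints are generated by i.i.d.\ samples from the posterior measure $\pazocal{N}(\bm{m}_i^{(k)},\bm{S}_i^{(k)})$ pushed through $\Gamma$. I would then identify the \emph{violation probability} $V(\theta^\star)$ defined in the statement with the probability that a fresh posterior sample produces a trajectory escaping the fitted box --- exactly the out-of-sample constraint violation of the scenario solution. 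With these identifications, the theorem follows directly from the scenario approach guarantee of Campi and Garatti: for a convex scenario program with $d$ decision variables, the solution's violation probability exceeds $\varepsilon$ with probability at most $\sum_{i=0}^{d-1}\binom{N_{\mathrm{mc}}}{i}\varepsilon^i(1-\varepsilon)^{N_{\mathrm{mc}}-i}$, so requiring this sum to be at most $\beta$ yields $\mathbb{P}(V(\theta^\star)\le\varepsilon)\ge 1-\beta$.

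The main technical care lies in two places. First, the classical scenario bound in its sharpest form requires the \emph{fully supported} (or non-degenerate) assumption and uniqueness of the optimizer, so I would either appeal to the general version of the Campi--Garatti theorem that holds with the $\le\beta$ inequality without uniqueness (using the standard tie-breaking rule), or note that the separable box-fitting program has a unique minimal-volume solution almost surely under the continuity of the pushforward measure, which holds because $\Gamma$ is a continuous map and the posterior is an absolutely continuous Gaussian. Second, I must confirm that the number of \emph{support constraints} is bounded by the number of decision variables $d$: for a box-fitting problem each active bound $\underline{x}_{t,j}$ or $\overline{x}_{t,j}$ is determined by at most one scenario (the extremal one in that coordinate and time), so the number of support constraints is at most $d$, which is precisely the dimension entering the bound~\eqref{eq:scenario_bound}.

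I expect the main obstacle to be rigorously justifying the probabilistic-independence structure required by the scenario framework: the guarantee treats the $N_{\mathrm{mc}}$ trajectories as i.i.d.\ constraint samples, which is valid here because the samples $\bm{W}_i^{-(s)}$ are drawn i.i.d.\ from the fixed posterior $\pazocal{N}(\bm{m}_i^{(k)},\bm{S}_i^{(k)})$ and $\Gamma$ is a deterministic (measurable) map, so the induced trajectory samples are i.i.d.\ draws from the pushforward measure $\pazocal{P}_{\bm{x}(\cdot)}$. Once this is established, and the envelope construction is identified with the minimal-volume scenario program, the desired inequality is an immediate consequence of the scenario theorem, with no further estimation required.
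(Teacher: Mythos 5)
Your proposal is correct and follows essentially the same route as the paper's proof: cast the envelope fitting as a convex scenario program (linear objective, affine containment constraints) over the i.i.d.\ posterior-sampled trajectories, identify $V(\theta^\star)$ with the out-of-sample constraint violation, and invoke the Campi--Garatti bound with $d$ decision variables. Your added care regarding uniqueness/non-degeneracy and the support-constraint count addresses points the paper's own proof passes over silently, but it does not change the underlying argument.
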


\begin{proof}
Each rollout is generated by independently sampling 
$\bm{W}_i^{-(s)}\sim\pazocal{N}(\bm{m}_i^{(k)},\bm{S}_i^{(k)})$
and propagating it through the closed-loop map $\Gamma$.  
Thus the constraints
$\bm{x}^{(s)}(t)\in\pazocal{X}_t(\theta)$ correspond to i.i.d.\ scenario
constraints associated with the uncertainty $\bm{W}_i^-$.  

A forecast envelope $\theta^\star$ can be obtained by solving the convex
scenario program
\[
\begin{aligned}
\theta^\star \in \arg\min_{\theta} \quad &
    \left\{ \sum_{t=0}^{T}\sum_{j=1}^{n_x} \big(\overline{x}_{t,j} - \underline{x}_{t,j}\big) \right\} \\
\text{s.t.}\quad &
    \bm{x}^{(s)}(t)\in\pazocal{X}_t(\theta),
    \qquad \forall\, s=1{:}N_{\mathrm{mc}},\; t=0:T .
\end{aligned}
\]
The decision variable enters these constraints only through affine
inequalities, and the objective is linear; thus the program is convex.
By the scenario approach for convex programs~\citep{CampiGaratti_convex},
any such solution $\theta^\star$ satisfies
\[
\mathbb{P}(V(\theta^\star)>\varepsilon)
\le 
\sum_{i=0}^{d-1}
    \binom{N_{\mathrm{mc}}}{i}
    \varepsilon^{i}(1-\varepsilon)^{N_{\mathrm{mc}}-i},
\]
and condition~\eqref{eq:scenario_bound} ensures this probability is at most
$\beta$.
\end{proof}

Theorem~\ref{thm:forecast_tube} establishes that the forecast envelope
$\{\pazocal{X}_t^\star\}_{t=0}^{T}$, constructed from the Monte Carlo
ensemble, enjoys a reliability guarantee with respect to
the posterior over $\bm{W}_i^-$.  
With confidence at least $1-\beta$ (over the randomness of the training
rollouts), the probability that a new trajectory sampled from the posterior
exits the envelope over the prediction horizon is at most $\varepsilon$.  
The binomial condition~\eqref{eq:scenario_bound} characterizes the
trade-off between ensemble size, violation level, and confidence: given
$(\varepsilon,\beta)$, the required number of Monte Carlo rollouts
$N_{\mathrm{mc}}$ is the smallest integer satisfying
\eqref{eq:scenario_bound}; conversely, for a fixed ensemble size
$N_{\mathrm{mc}}$ and a confidence level $1-\beta$, the certified violation
probability $\bar{\varepsilon}$ is the smallest value of $\varepsilon$ for
which~\eqref{eq:scenario_bound} holds.  
These guarantees are non-asymptotic and depend only on the envelope
parameterization dimension $d$ and the number of posterior samples, in
accordance with standard results from the scenario approach~\citep{CampiGaratti_convex}.  
The Monte Carlo ensemble and the resulting forecast envelope provide
empirical moments, nonparametric credible regions, and a
scenario-certified bound on predicted states and inputs, enabling
uncertainty-aware downstream tasks such as probabilistic constraint
checking, risk assessment, or anticipating opponent behavior.

\section{Simulations and Results}
\subsection{Linear--Quadratic (LQ) Inverse Game}
We first evaluate the proposed Bayesian inverse--game method on a 
linear--quadratic (LQ) differential game adapted from~\citep{ILQG_SharedControl}.\footnote{Code for the algorithm and simulations is available at \texttt{https://github.com/TUM-ITR/bayesian-agent-behavior}.}
In this setting, closed-form Nash equilibria can be computed efficiently 
(e.g., via the Lyapunov--iteration algorithm of~\citep{li1995lyapunov}), enabling 
direct comparison between the inferred parameters and ground truth.  
The system error dynamics are
\begin{align}
    \dot{\bm{x}}(t) = A\bm{x}(t) + B_1\bm{u}_1(t) + B_2\bm{u}_2(t),
\end{align}
with matrices
\begin{align*}
A = \begin{pmatrix}
    0 & 1 & -1 & 0 \\
    1 & 0 & 2  & 1 \\
    0 & -2 & 0 & 1 \\
    0 & 1 & 0 & -1
\end{pmatrix}, \;
B_1 = \tfrac{1}{2}\begin{pmatrix}
    0 & 1 \\ 0 & 0 \\ 0 & 0 \\ 1 & 0
\end{pmatrix}, \;
B_2 = \tfrac{1}{2}\begin{pmatrix}
    0 & 0 \\ 0 & 1 \\ 1 & 0 \\ 0 & 0
\end{pmatrix}.
\end{align*}

The nominal (mean) cost matrices are
\begin{align*}
    Q_{1,\mathrm{mean}} &= \mathrm{diag}(1,\,2/5,\,3,\,1), &
    R_{1,\mathrm{mean}} &= \mathrm{diag}(1,\,1), \\
    Q_{2,\mathrm{mean}} &= \mathrm{diag}(1,\,2/3,\,1,\,2), &
    R_{2,\mathrm{mean}} &= \mathrm{diag}(1,\,0.5).
\end{align*}

As in standard inverse-LQ formulations, the intrinsic scale ambiguity is removed
by treating $R_{i,[11]}$ as known. The remaining diagonal entries of $Q_i$ and
$R_i$ are sampled independently from Gaussians centered at their nominal values,
with variances
\begin{align*}
    \mathrm{Var}(Q_1) &= \mathrm{diag}(1,\,0.16,\,9,\,1), &
    \mathrm{Var}(R_{1,[22]}) &= 1, \\
    \mathrm{Var}(Q_2) &= \mathrm{diag}(1,\,4/9,\,1,\,4), &
    \mathrm{Var}(R_{2,[22]}) &= \tfrac{1}{4}.
\end{align*}
Thus, for all non-fixed entries,
\begin{subequations}
\label{distribs_1stex}
\begin{align}
    Q_{i,[jj]} &\sim \pazocal{N}\!\left(Q_{i,[jj],\mathrm{mean}},\; \mathrm{Var}(Q_i)_{jj}\right), \\
    R_{i,[22]} &\sim \pazocal{N}\!\left(R_{i,[22],\mathrm{mean}},\; \mathrm{Var}(R_i)_{22}\right).
\end{align}
\end{subequations}

The prior mean $\bm{m}_{i,0}$ is constructed from the nominal cost matrices
and the associated value-function solutions $V_i(\bm{x}) = \bm{x}^\top P_i \bm{x}$.
The corresponding quadratic feature map is
\[
    \phi_{V_i}(\bm{x})
    = [\,x_1^2,\; x_1 x_2,\; \dots,\; x_3 x_4,\; x_4^2\,]^\top .
\]

In the LQ setting, the known quadratic structure of the value function allows for
a reliable estimation of parameter variability. A Monte Carlo procedure using
$200$ samples drawn from~\eqref{distribs_1stex} is used to compute the associated
value functions and obtain a sample-based characterization of feature-parameter
variability. Together with the priors in~\eqref{distribs_1stex}, this yields the
prior $(\bm{m}_{i,0}, \bm{S}_{i,0})$.

We then run the online Bayesian inverse--game algorithm on data generated from
an LQ game initialized at
\[
    \bm{x}_0 = [\,3,\,-4,\,2,\,1.5\,]^\top.
\]
To capture representative closed-loop behavior in a range of task-relevant
configurations, the system is successively regulated toward three intermediate
targets:
\[
    \bm{x}_1 = \bm{0}^\top,\quad
    \bm{x}_2 = [\,1,-2,2,1\,]^\top,\quad
    \bm{x}_3 = [\,-2,1,3,-2\,]^\top,
\]
with data sampled at $\Delta t = 0.01\,\mathrm{s}$.

Figure~\ref{params_p1_LQ} shows the Bayesian prediction of the game parameters 
for Player~1. The posterior means converge rapidly to the true values, while the 
shaded $2\sigma$ region shrinks as more data are incorporated, indicating growing 
confidence in the inferred objective. Parameters that are more strongly excited 
by the trajectory tighten earlier, illustrating the data-adaptive nature of the 
online estimator.

Next, we demonstrate how the algorithm enables probabilistic predictions of 
adversarial behavior. We assume the role of Player~2 and estimate Player~1's 
objective using only the first $30$ data samples, representing an early-phase 
interaction where prior knowledge is limited. The resulting posterior is then 
used to predict Player~1's future inputs and the corresponding closed-loop 
state evolution.  
All prediction rollouts are initialized from the same state $\bm{x}_0$ as in the 
parameter-identification evaluation.
To visualize typical predicted behavior, we draw $N_{\mathrm{mc}} = 10000$ 
samples from the posterior and propagate them through the dynamics.  
Figure~\ref{input_preds} shows the predicted inputs for both control channels: 
the posterior mean closely follows the true input signals, while the $95\%$ 
credible regions—computed as pointwise empirical quantiles—expand in portions 
of the trajectory where Player~1's behavior is harder to predict.  
An analogous plot for the state evolution is provided in 
Figure~\ref{state_preds}.  
In both figures, we additionally overlay the scenario-certified forecast 
envelopes (dashed lines), allowing a direct comparison between empirical 
credible regions and guaranteed bounds.

It is worth noting that visually stable credible regions typically require only 
a modest number of Monte Carlo samples, whereas scenario-theoretic guarantees 
are more demanding.  
In this experiment, the prediction horizon spans $T = 6 \si{s}$ with a sampling 
time of $\Delta t = 0.1 \si{s}$, resulting in a long sequence of envelope 
parameters; obtaining nontrivial violation bounds therefore necessitates a 
sufficiently large number of rollout samples.  
Using the same $N_{\mathrm{mc}}$ Monte Carlo rollouts, we construct the forecast 
envelope introduced in Section~\ref{sec:forecast_envelope}.  
Applying Theorem~\ref{thm:forecast_tube} with confidence level $1-\beta = 0.99$ 
yields a certified violation probability of $\varepsilon = 0.0531$.  
That is, with $99\%$ confidence (over the randomness of the posterior 
sampling), a trajectory drawn from the posterior exits the forecast envelope 
with probability at most $5.31\%$.

In summary, the $95\%$ credible regions describe the typical spread of posterior 
predictions at each time step, while the scenario-certified forecast envelope 
provides a rigorous, distribution-free guarantee on the entire predicted 
trajectory.  
Both are derived from the same Monte Carlo ensemble and together offer a 
comprehensive characterization of uncertainty in Player~1's predicted behavior.

\begin{figure}[t]
    \includegraphics[width=0.48\textwidth]{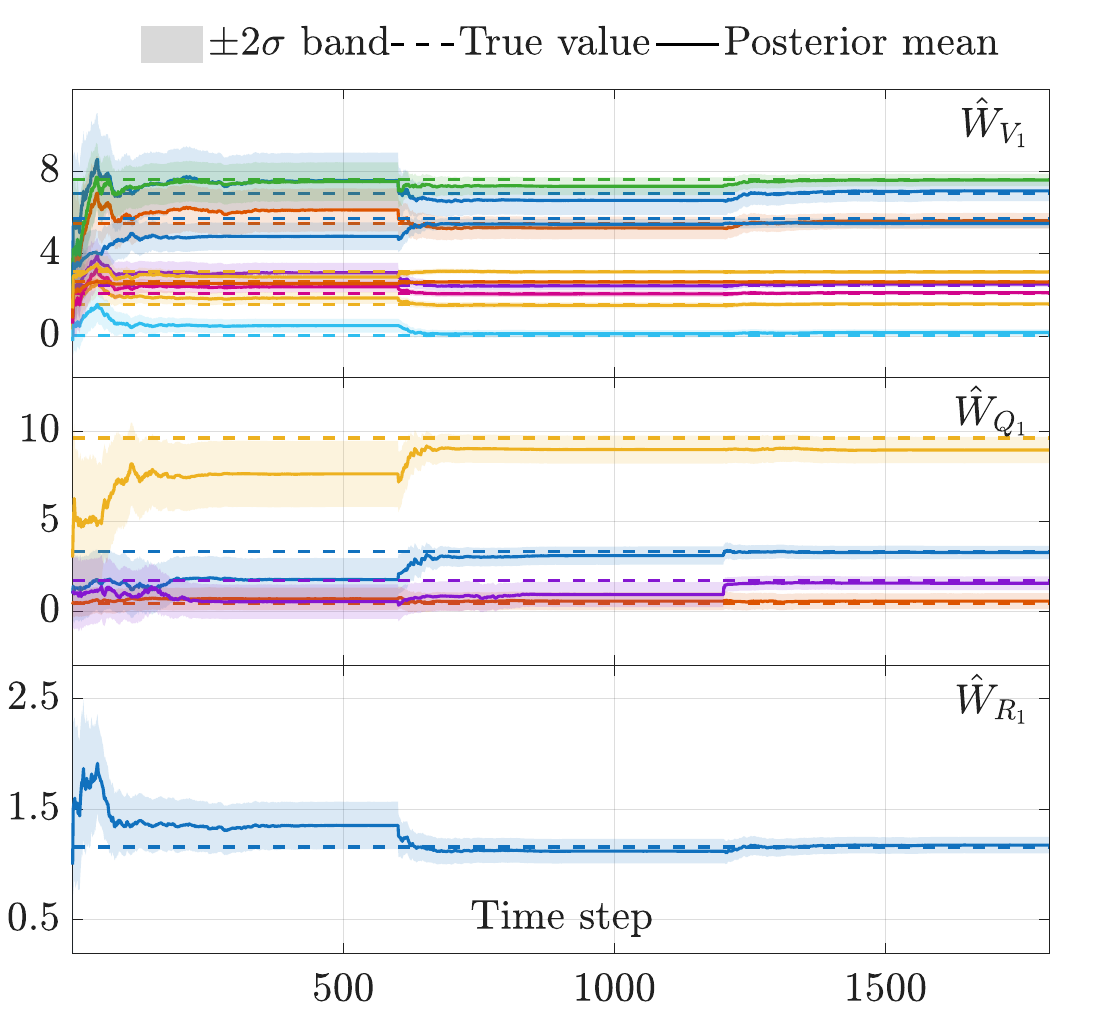}
    \vspace{-0.25cm}
    \caption{Convergence of the inferred game parameters for Player~1. 
    The dashed lines denote the true values, solid lines show the posterior means, 
    and the shaded regions represent the $\pm 2\sigma$ uncertainty. 
    The three panels correspond to value-function weights, state-cost weights, 
    and the unknown input-cost weight.}
    \label{params_p1_LQ}
\end{figure}

\begin{figure}[t]
    \includegraphics[width=0.48\textwidth]{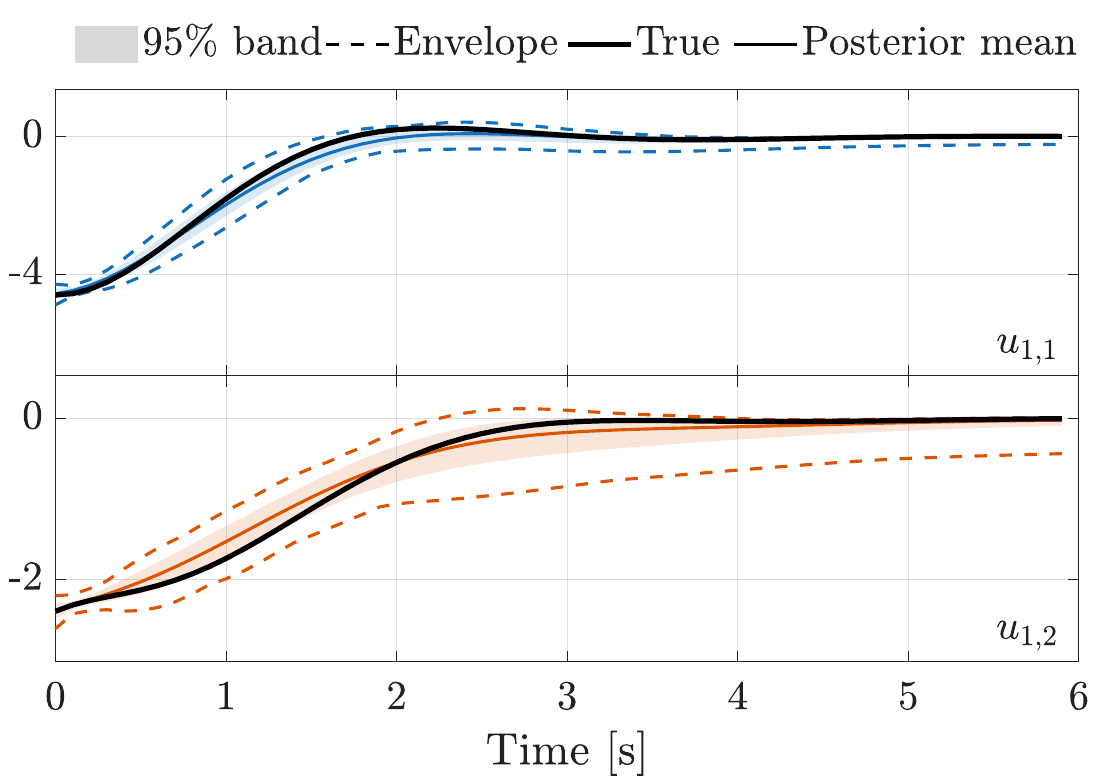}
    \vspace{-0.25cm}
    \caption{Predicted inputs of Player~1 under uncertain cost estimation. 
    True inputs (thick solid), posterior means (solid), $95\%$ credible regions (shaded), 
and scenario-certified forecast envelopes (dashed, colored) are shown for both control 
channels (blue and orange).}
    \label{input_preds}
\end{figure}

\begin{figure}[t]
    \includegraphics[width=0.48\textwidth]{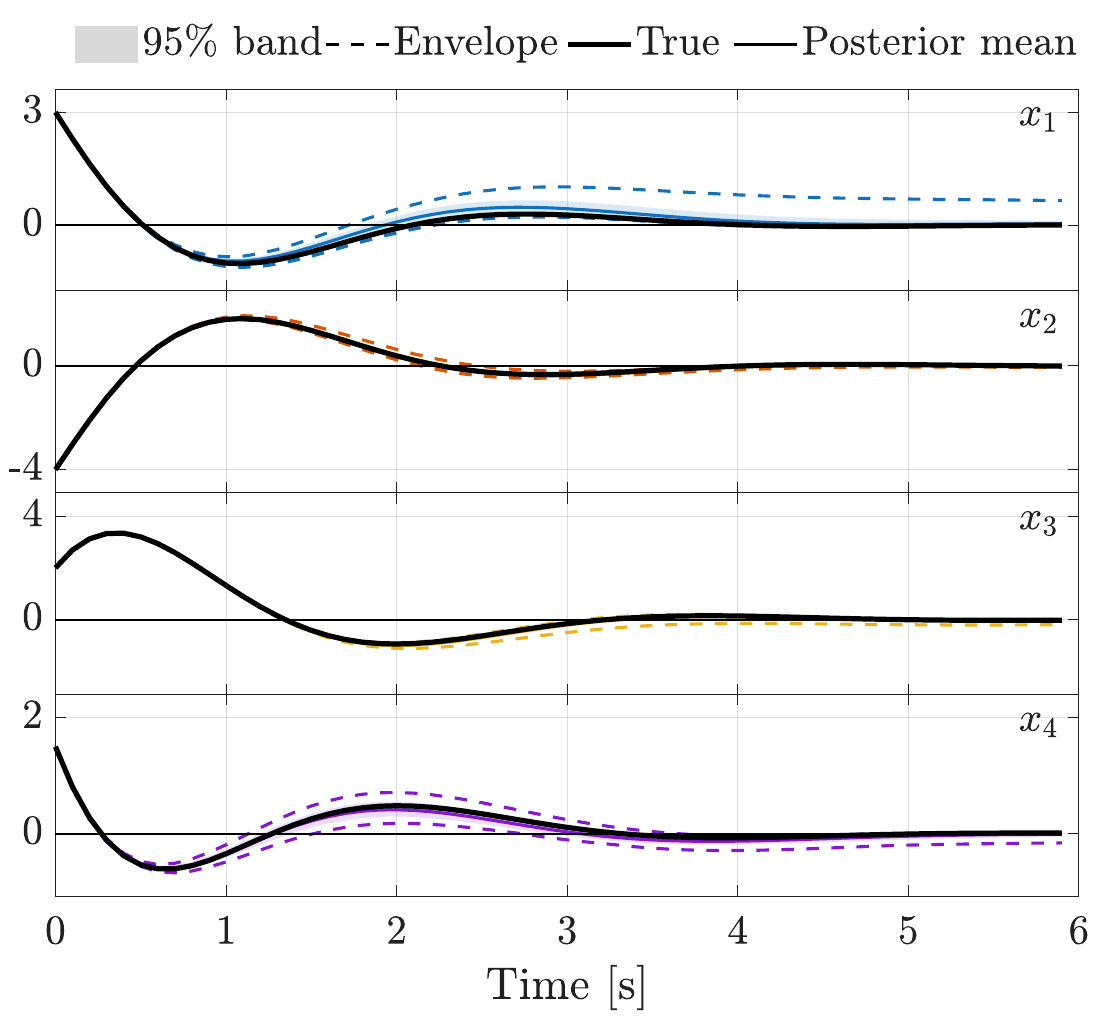}
    \vspace{-0.25cm}
    \caption{Predicted state trajectories under uncertainty in Player~1's behavior. 
True trajectories (thick solid), posterior means (solid), $95\%$ credible regions (shaded), 
and scenario-certified forecast envelopes (dashed, colored) are shown for all state components.}
    \label{state_preds}
\end{figure}

\subsection{Nonlinear Shared-Control Example}
We next consider a nonlinear one-dimensional differential game that reflects
characteristic features of shared-control scenarios in HRI. Unlike the LQ
example, the interaction dynamics are state-dependent and non-affine, yielding
configuration-dependent and highly nonuniform influence of the two players.  
The drift and input channels are
\begin{subequations}
\label{eq:nonlinear_dynamics_example}
\begin{align}  
    f(x) &= \frac{-0.5x - 0.3x^3}{b(x)}, \\[2pt]
    g_1(x) &= \frac{1.2 + 0.8\sin(1.5x)}{b(x)}, \\[2pt]
    g_2(x) &= \frac{1.0 - 0.7\sin(1.5x + \pi/3)}{b(x)},
\end{align}
\end{subequations}
where
\[
    b(x) = b_0 \bigl(1 + \beta x^2\bigr),
    \qquad b_0 = 2.0,\; \beta = 0.3,
\]
models a state-dependent “inertia’’ term.

The oscillatory structure of $(g_1,g_2)$ creates alternating regions in which
either the human or the robot dominates the effective control authority. Such
patterns are representative of physical shared-control settings where task
geometry or interaction constraints induce configuration-dependent
responsiveness. The resulting game is more challenging than the LQ case because
the players’ influence varies across the state space, and the HJB relations are
nonlinear even after fixing the running costs.

Each player uses a quadratic running cost
\begin{equation}
    \label{eq:NL_cost}
    \ell_i(x,u_i) = Q_i x^2 + R_i u_i^2, 
    \qquad R_1 = R_2 = 1,
\end{equation}
with unknown positive parameters $Q_i$. To generate realistic ground truth while
capturing prior variability, each $Q_i$ is sampled from a Gaussian distribution
with means
\begin{align}
    \label{eq:NL_state_costs}
    Q_{1,\mathrm{mean}} = 1.0, \qquad Q_{2,\mathrm{mean}} = 0.3.
\end{align}
and variances
\begin{align}
    \mathrm{Var}(Q_1) = 0.5 \qquad
    \mathrm{Var}(Q_2) = 0.18. 
\end{align}
For these sampled costs, the coupled nonlinear HJB equations associated with
\eqref{eq:nonlinear_dynamics_example} and \eqref{eq:NL_cost} are solved numerically on a
compact domain using a Legendre basis of order~10. The resulting value functions
$V_i^\star(x)$ are used both to generate expert trajectories and to benchmark
the Bayesian inverse-game identification.

We simulate two regulation tasks, starting from $x_0=4$ and $x_1=-4$, over an
$8\,\mathrm{s}$ horizon, collecting data at each time step. The Bayesian
inverse-game algorithm is then run online along the resulting trajectory.
Figure~\ref{NL_value-funcs} shows the estimated value functions after $500$
updates ($t=5\,\mathrm{s}$). The posterior means closely match the true value
functions on the portions of the state space visited by the system, while the
uncertainty remains higher in regions that receive little or no data (e.g.,
$x\le -4$ or $x\ge 4$). This illustrates the estimator’s ability to recover
nonlinear objective structure where information is available and to reflect
appropriate uncertainty elsewhere.
\begin{figure}[t]
    \includegraphics[width=0.48\textwidth]{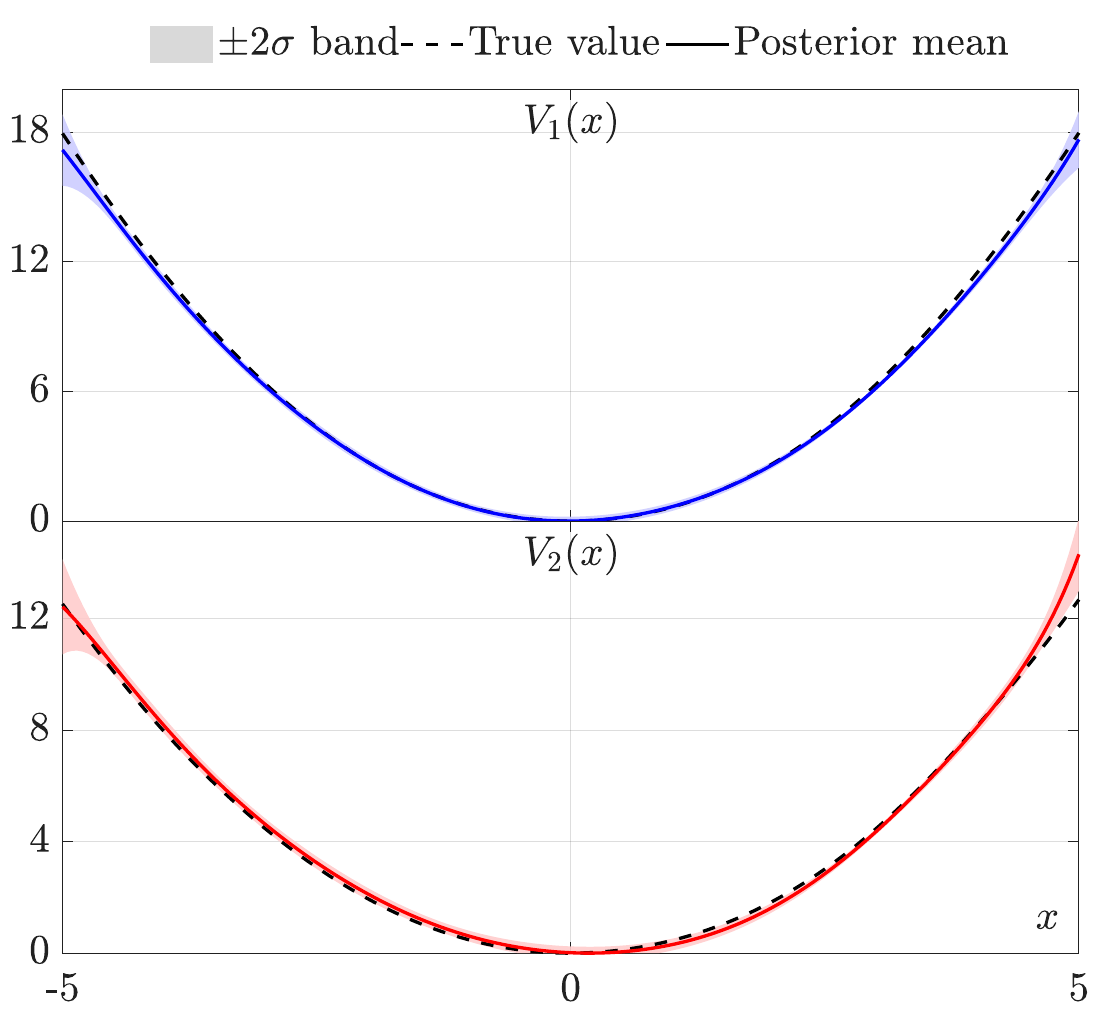}
    \vspace{-0.25cm}
    \caption{Estimated value functions for game (\ref{eq:nonlinear_dynamics_example}) after 500 updates ($t=5\,\mathrm{s}$).
    True value functions (dashed), posterior means (solid), and corresponding
    confidence regions (shaded). Blue: Player~1; Red: Player~2.}
    \label{NL_value-funcs}
\end{figure}

\section{Conclusion}
We introduced an online Bayesian framework for inverse differential games based
on Hamilton\allowbreak--\allowbreak Jacobi\allowbreak--\allowbreak Bellman relations.  
By combining feature-based HJB residuals with Gaussian priors, the method enables
uncertainty-aware inference, informative early predictions, and recursive updates
without history stacks.  
The framework applies to linear and nonlinear dynamics and supports general
value-function representations, making it suitable for adaptive, risk-aware
multi-agent interaction. Future work will explore time-varying objectives, richer nonparametric models,
and validation in human–robot interaction scenarios.


\section*{DECLARATION OF GENERATIVE AI AND AI-ASSISTED TECHNOLOGIES IN THE WRITING PROCESS}
During the preparation of this work, the authors used OpenAI's ChatGPT to assist with text revision and improvement of clarity. After using this tool, the authors reviewed and edited the content as needed and take full responsibility for the final version of the manuscript.

\bibliography{ifac_refs}             

\appendix       
\end{document}